% This is samplepaper.tex, a sample chapter demonstrating the
% LLNCS macro package for Springer Computer Science proceedings;
% Version 2.20 of 2017/10/04
%
\documentclass[runningheads]{llncs}
\usepackage{xcolor}
\usepackage{booktabs}
\usepackage{algorithm}
\usepackage[noend]{algpseudocode}

%I can remove these packages
\usepackage{lineno}
\usepackage{setspace}
%
%%Own definitions

\algnewcommand\algorithmicforeach{\textbf{for each}}
\algdef{S}[FOR]{ForEach}[1]{\algorithmicforeach\ #1\ \algorithmicdo}

\usepackage[utf8]{inputenc}
\usepackage{graphicx}
% Used for displaying a sample figure. If possible, figure files should
% be included in EPS format.
%
% If you use the hyperref package, please uncomment the following line
% to display URLs in blue roman font according to Springer's eBook style:
% \renewcommand\UrlFont{\color{blue}\rmfamily}

\begin{document}
\title{An Index for Sequencing Reads Based on The Colored de Bruijn Graph\thanks{Partially supported by a Conicyt Ph.D. Scholarship and by the European Union’s Horizon 2020 research and innovation programme under the Marie Sklodowska-Curie [grant agreement No 690941]}}
%
%\titlerunning{Abbreviated paper title}
% If the paper title is too long for the running head, you can set
% an abbreviated paper title here
%
\author{Diego Díaz-Domínguez\inst{1,2}}
\authorrunning{D. Díaz-Domínguez}
% First names are abbreviated in the running head.
% If there are more than two authors, 'et al.' is used.
%
\institute{Department of Computer Science, University of Chile, Chile
\and CeBiB — Center for Biotechnology and Bioengineering, University of Chile, Chile
\email{diediaz@dcc.uchile.cl}\\}
\maketitle % typeset the header of the contribution
\begin{abstract}
In this article, we show how to transform a colored de Bruijn graph (dBG) into a practical index for processing massive sets of sequencing reads. Similar to previous works, we encode an instance of a colored dBG of the set using $BOSS$ and a color matrix $C$. To reduce the space requirements, we devise an algorithm that produces a smaller and more sparse version of $C$. The novelties in this algorithm are (i) an incomplete coloring of the graph and (ii) a greedy coloring approach that tries to reuse the same colors for different strings when possible. We also propose two algorithms that work on top of the index;  one is for reconstructing reads, and the other is for contig assembly. Experimental results show that our data structure uses about half the space of the plain representation of the set (1 Byte per DNA symbol) and that more than 99\% of the reads can be reconstructed just from the index.
\keywords{de Bruijn graphs \and DNA sequencing \and Compact data structures.}
\end{abstract}
\section{Introduction} \label{sec:intro}

A set of \emph{sequencing reads} is a massive collection $R = \{R_1,\ldots,R_n \}$ of $n$ overlapping short strings that together encode the sequence of a DNA sample. Analyzing this kind of data allows scientists to uncover complex biological processes that otherwise could not be studied.  There are many ways for extracting information from a set of reads (see \cite{reuter2015high} for review).  However, in most of the cases, the process can be reduced to build a \emph{de Bruijn graph} (dBG) of the collection and then search for graph paths that spell segments of the source DNA (see  ~\cite{bray2016near,iqbal2012novo,salmela2016accurate} for some examples).

Briefly, a dBG is a directed labeled graph that stores the transitions of the substrings of size $k$, or \emph{kmers}, in $R$. Constructing it is relatively simple, and the resulting graph usually uses less space than the input text.  Nevertheless,  this data structure is lossy, so it is not always possible to know if the label of a path matches a substring of the source DNA. The only paths that fulfill this property are those in which all nodes, except the first and last, have indegree and outdegree one~\cite{kececioglu1995comb}. Still, they represent just a fraction of the complete dBG.
 
More branched parts of the graph are also informative, but traverse them requires extra information to avoid spelling incorrect sequences. A simple solution is to augment the dBG with colors, in other words, we assign a particular color $c_i$ to every string $R_i \in R$, and then we store the same $c_i$ in every edge that represents a kmer of $R_i$. In this way, we can walk over the graph always following the successor node colored with the same color of the current node.

The idea of coloring dBGs was first proposed by Iqbal et al. \cite{iqbal2012novo}. Their data structure, however, contemplated a union dBG built from several string collections, with colors assigned to the collections rather than particular strings. Considering the potential applications of colored dBGs, Boucher et al. \cite{BBGPS15} proposed a succinct version of the data structure of Iqbal et al. In their index, called VARI, the topology of the graph is encoded using $BOSS$~\cite{BOSS12}, and the colors are stored separately from the dBG in a binary matrix $C$, in which the rows represent the kmers and the columns represent the colors. Since the work of Boucher et al., some authors have tried to compress and manipulate $C$ even further; including that of \cite{almodaresi2017rainbowfish}, \cite{pandey2018mantis}, \cite{holley2015bloom}, while others, such as \cite{mustafa2017metannot} and \cite{mustafa2018dynamic} have proposed methods to store compressed and dynamic versions $C$.

An instance of a colored dBG for a single set $R$ can also be encoded using a color matrix. The only difference though is that the number of columns is proportional to the number of sequences in $R$. Assigning a particular color to every sequence is not a problem if the collection is of small or moderated size. However, massive datasets are rather usual in Bioinformatics, so even using a succinct representation of $C$ might not be enough. One way to reduce the number of columns is to reuse colors for those sequences that do not share any kmer in the dBG. Alpanahi et al. \cite{alipanahi2018recoloring} addressed this problem, and showed that it is unlikely that the minimum-size coloring can be approximated in polynomial time.

Alpanahi et al. also proposed a heuristic for recoloring the colored dBG of a set of sequences that, in practice, dramatically reduces the number of colors when $R$ is a set of sequencing reads. Their coloring idea, however, might still produce incorrect sequences, so the applications of their version of the colored dBG are still limited.

\noindent{\bf Our Contributions.} 
In this article, we show how to use a colored dBG to store and analyze a collection of sequencing reads succinctly. Similarly to VARI, we use $BOSS$ and the color matrix $C$ to encode the data. However, we reduce the space requirements by partially coloring the dBG and greedily reusing the same colors for different reads when possible. We also propose two algorithms that work on top of the data structure, one for reconstructing the reads directly from the dBG and other for assembling contigs. We believe that these two algorithms can serve as a base to perform Bioinformatics analyses in compressed space. Our experimental results show that on average, the percentage of nodes in $BOSS$  that need to be colored is about 12.4\%, the space usage of the whole index is about half the space of the plain representation of $R$ (1 Byte/DNA symbol), and that more than 99\% of the original reads can be reconstructed from the index.

\section{Preliminaries}\label{sec:pre}

\paragraph{\textbf{{\em DNA strings}}. \rm A DNA sequence $R$ is a string over the alphabet $\Sigma=\{\texttt{a},\texttt{c},\texttt{g},\texttt{t}\}$ (which we map to $[2..5]$), where every symbol represents a particular nucleotide in a DNA molecule. The DNA \emph{complement} is a permutation $\pi[2..\sigma]$ that reorders the symbols in $\Sigma$ exchanging \texttt{a} with \texttt{t} and \texttt{c} with \texttt{g}. The \emph{reverse complement} of $R$, denoted $R^{rc}$, is a string transformation that reverses $R$ and then replaces every symbol $R[i]$ by its complement $\pi(R[i])$. For technical convenience we add to $\Sigma$ the so-called \emph{dummy} symbol \texttt{\$}, which is always mapped to 1.}

\paragraph{\textbf{{\em de Bruijn graph}}. \rm A de Bruijn graph (dBG) \cite{de1946combinatorial} of the string collection $R=\{R_1,\ldots,R_n\}$ is a labeled directed graph $G=(V,E)$ that encodes the transitions between the substrings of size $k$ of $R$, where $k$ is a parameter. Every node $v \in V$ is labeled with a unique $k-1$ substring of $R$. Two nodes $v$ and $u$ are connected by a directed edge $(v,u) \in E$ if the $k-2$ suffix of $v$ overlaps the $k-2$ prefix of $u$ and the $k$-string resulted from the overlap exists as substring in $R$. The label of the edge is the last symbol of the label of node $u$.}

\paragraph{\textbf{{\em Rank and select data structures}}. \rm Given a sequence $B[1..n]$ of elements over the alphabet $\Sigma=[1..\sigma]$, $\texttt{rank}_{b}(B,i)$ with $i \in [1..n]$ and $b\in\Sigma$, returns the number of times the element $b$ occurs in $B[1..i]$, while $\texttt{select}_b(B,i)$ returns the position of the $i$th occurrence of $b$ in $B$. For binary alphabets, $B$ can be represented in $n+o(n)$ bits so that \texttt{rank} and \texttt{select} are solved in constant time~\cite{Cla96}. When $B$ has $m \ll n$ 1s, a compressed representation using $m\lg\frac{n}{m}+\mathcal{O}(m)+o(n)$ bits, still solving the operations in constant time, is of interest~\cite{raman2007}.}

\paragraph{\textbf{{\em BOSS index.}}}The $BOSS$ data structure \cite{BOSS12} is a succinct representation for dBGs based on the \emph{Burrows-Wheeler Transform} ($BWT$)~\cite{BW94}. In this index, the labels of the nodes are regarded as rows in a matrix and sorted in reverse lexicographical order, i.e., strings are read from right to left. Suffixes and prefixes in $R$ of size below $k-1$ are also included in the matrix by padding them with \texttt{\$} symbols in the right size (for suffixes) or the left side (for prefixes). These padded nodes are also called \emph{dummy}. The last column of the matrix is stored as an array $K[1..\sigma]$, with $K[i]$ being the number of labels lexicographically smaller than any other label ending with character $i$. Additionally, the symbols of the outgoing edges of every node are sorted and then stored together in a single array $E$. A bit vector $B[1..|E|]$ is also set to mark the position in $E$ of the first outgoing symbol of each node.
The complete index is thus composed of the vectors $E$, $K$, and $B$. It can be stored in $|E|(\mathcal{H}_{0}(E)+\mathcal{H}_{0}(B))(1+o(1)) + \mathcal{O}(\sigma \log n)$ bits, where $\mathcal{H}_0$ is the zero-order empirical entropy~\cite[Sec~2.3]{navarro2016compact}. This space is reached with a Huffman-shaped Wavelet Tree \cite{MN05} for $E$, a compressed bitmap~\cite{raman2007} for $B$ (as it is usually very dense), and a plain array for $K$. 
Bowe et al. \cite{BOSS12} defined the following operations over $BOSS$ to navigate the graph:
\begin{itemize}
    \item \texttt{outdegree$(v)$}: number of outgoing edges of $v$.
    \item \texttt{forward$(v, a)$}: node reached by following an edge from $v$ labeled with $a$.
    \item \texttt{indegree$(v)$}: number of incoming edges of $v$.
    \item \texttt{backward$(v)$}: list of the nodes with an outgoing edge to $v$.
    \item \texttt{nodeLabel$(v)$}: label of node $v$.
\end{itemize}
The first four operations can be answered in $\mathcal{O}(\log\sigma)$ time while the last one takes $\mathcal{O}(k\log\sigma)$ time. For our purposes, we also define the following operations:
\begin{itemize}
    \item \texttt{forward\_r$(v, r)$}: node reached by following the \emph{r-th} outgoing edge of $v$ in lexicographical order. 
    \item \texttt{label2Node$(S)$}: identifier in $BOSS$ of the node labeled with the $(k-1)$-string $S$. 
\end{itemize}
The function \texttt{forward\_r} is a small variation \texttt{forward}, and it maintains the original time, while the function \texttt{label2Node} is the opposite of \texttt{nodeLabel}, but it also maintain its complexity in $\mathcal{O}(k\log\sigma)$ time.

\paragraph{\textbf{{\em Graph coloring.}}}The problem of coloring a graph $G=(V,E)$ consists of assigning an integer $c(v) \in [1..\omega]$ to each node $v \in V$ such that i) no adjacent nodes have the same color and ii) $\omega$ is minimal. The coloring is \emph{complete} if all the nodes of the graph are assigned with one color, and it is \emph{proper} if constraint i) is met for each node. The chromatic number of a graph $G$, denoted by $\chi(G)$, is the minimum number of colors required to generate a coloring that is complete and proper. A coloring using exactly $\chi(G)$ colors is considered to be optimal. Determining if there is a feasible $\omega$-coloring for $G$ is well known to be NP-complete, while the problem of inferring $\chi(G)$ is NP-hard~\cite{lewis2015guide}.  

\paragraph{\textbf{{\em Colored dBG.}}}\label{prem:dbgcol} The first version of the colored dBG \cite{iqbal2012novo} was described as a union graph $G$ built from several dBGs of different string collections. The edges in $G$ that encode the kmers of the \emph{i-th} collection are assigned the color $i$. The compacted version of this graph \cite{BBGPS15} represents the topology of $G$ using the $BOSS$ index and the colors using a binary matrix $C$, where the position $C[i,j]$ is set to true if the kmer represented by the \emph{i-th} edge in the ordering of $BOSS$ is assigned color $j$. The rows of $C$ are then stored using the compressed representation for bit vectors of \cite{raman2007}, or using Elias-Fano encoding \cite{fano1971number,elias1974,okanohara2007practical} if the rows are very sparse. In the single set version of the colored dBG, the colors are assigned to every string. Therefore, the number of columns in $C$ grows with the size of the collection. Alipanahi et al.~\cite{alipanahi2018recoloring} noticed that we could reduce the space of $C$ by using the same colors in those strings that have no common kmers. This new problem was named the \emph{CDBG-recoloring}, and formally stated as follows; given a set $R$ of strings and its dBG $G$, find the minimum number of colors such that i) every string $R_i \in R$ is assigned one color and ii) strings having two or more kmers in common in $G$ cannot have the same color.  Alipanahi et al. \cite{alipanahi2018recoloring} showed that an instance $I(G')$ of the \emph{Graph-Coloring} problem can be reduced in polynomial time to another instance $I'(G)$ of the \emph{CDBG-recoloring} problem. Thus, any algorithm that finds $\chi(G')$, also finds the minimum number of colors for dBG $G$. However, they also proved that the decision version of \emph{CDBG-Recoloring} is NP-complete.

\section{Definitions}

Let $R=\{R_1,R_2,....,R_n\}$ be a collection of $n$ DNA sequencing \emph{reads}, and let $R'=\{R_1,R_1^{rc}..R_n,R_n^{rc}\}$ be a collection of size $2n$ that contains the strings in $R$ along with their DNA reverse complements. The dBG of order $k$ constructed from $R'$ is defined as $G_{R'}^{k}=(V,E)$, and an instance of $BOSS$ for $G_{R'}^{k}$ is denoted as $BOSS(G_{R'}^{k})=(V',E')$, where $V'$ and $E'$ include the dummy nodes and their edges. For simplicity, we will refer to $BOSS(G_{R'}^{k})$ just as $BOSS(G)$. A node in $V'$ is considered an \emph{starting} node if its $k-1$ label is of the form \texttt{\$}$A$, where \texttt{\$} is a dummy symbol and $A$ is a $k-2$ prefix of one or more sequences in $R'$. Equivalently, a node is considered an \emph{ending} node if its $k-1$ label is of the form $A$\texttt{\$}, with \texttt{\$} being a dummy and $A$ being a $k-2$ suffix of one or more sequences in $R'$. Nodes whose labels do not contain dummy symbols are considered \emph{solid}, and solid nodes with at least one predecessor node with outdegree more than one are considered \emph{critical}. For practical reasons, we define two extra functions, \texttt{isStarting} and \texttt{isEnding} that are used to check if a node is starting or ending respectively.

A \emph{walk} $P$ over the dBG of $BOSS(G)$ is a sequence $(v_0,e_0,v_1...v_{t-1},e_t,v_t)$ where $v_0,v_1,...v_{t-1},v_t$ are nodes and $e_1..e_t$ are edges, $e_i$ connecting $v_{i-1}$ with $v_i$. $P$ is a \emph{path} if all the nodes are different, except possibly the first and the last. In such case, $P$ is said to be a \emph{cycle}. A sequence $R_i \in R$ is \emph{unambiguous} if there is a path in $BOSS(G)$ whose label matches the sequence of $R_i$ and if no pair of colored nodes in $(u,v) \in P$ share a predecessor node $v' \in P$. In any other case, $R_i$ is \emph{ambiguous}. Finally, the path $P_i$ that spells the sequence of $R_i$ is said to be \emph{safe} if every one of its branching nodes has only one successor colored with the color of $R_i$.

We assume that $R$ is a \emph{factor-free} set, i.e., no $R_i \in R$ is also a substring of another sequence $R_j$, with $i\neq j$. 

\section{Coloring a dBG of reads}

In this section, we define a coloring scheme for $BOSS(G)$ that generates a more succinct color matrix, and that allows us to reconstruct and assemble unambiguous sequences of $R'$. We use the dBG of $R'$ because most of the Bioinformatic analyses require the inspection of the reverse complements of the reads.   
Unlike previous works, the rows in $C$ represent the nodes in $BOSS(G)$ instead of the edges.

\paragraph{\textbf{{\em A partial coloring.}}}\label{sec:col_sch}
We make $C$ more sparse by coloring only those nodes in the graph that are \emph{strictly} necessary for reconstructing the sequences. We formalize this idea with the following lemma:

\begin{lemma}\label{lem:col}
For the path of an unambiguous sequence $R_i \in R'$ to be safe we have to color the starting node $s_i$ that encodes the $k-2$ prefix of $R_i$, the ending node $e_i$ that encodes the $k-2$ suffix of $R_i$ and the critical nodes in the path.
\end{lemma}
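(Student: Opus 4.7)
The plan is to treat the lemma as a case analysis over the branching nodes of $P_i$, showing that coloring the three classes of nodes named in the statement (namely $s_i$, $e_i$, and every critical node of $P_i$) with $c_i$ suffices, and is essentially required, to make $P_i$ safe. First I would decompose $P_i$ into $s_i$, a stretch of solid interior nodes whose labels are $(k-1)$-substrings of $R_i$, and $e_i$; this decomposition follows from the definitions of starting, ending, and solid nodes together with the fact that the sliding window advances one symbol at a time along $R_i$, so the only non-solid nodes of $P_i$ are the two flanking ones.

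The core of the argument is to inspect, for each branching node $v \in P_i$, the successor $u$ of $v$ along $P_i$, and to match $u$ against the classes being colored. Two cases arise: either $u$ is solid, in which case $v$ is a predecessor of $u$ with outdegree larger than one, so $u$ is critical by definition; or $u$ is non-solid, and by the structural observation above $u = e_i$. In either case the $P_i$-successor of $v$ lies in $\{e_i\} \cup \{\text{critical nodes of } P_i\}$, so coloring those nodes together with $s_i$ assigns $c_i$ to the correct outgoing branch at every branching point of $P_i$. To conclude that this is the \emph{only} successor of $v$ carrying $c_i$, I would invoke the unambiguity hypothesis on $R_i$: no two colored nodes in $P_i$ share a predecessor in $P_i$, and the scheme colors with $c_i$ only nodes of $P_i$, which rules out a competing $c_i$-colored sibling of $u$. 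The inclusion of $s_i$ is additionally justified as the anchor needed to locate where $P_i$ begins during reconstruction and to cover the subcase in which $s_i$ itself is branching.

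For the converse I would argue minimality by contradiction: removing any critical node on $P_i$ whose $P_i$-predecessor is branching leaves that predecessor with no $c_i$-colored successor, violating safety; dropping $s_i$ or $e_i$ removes the endpoint markers that let us initiate or terminate the safe traversal. The main obstacle I anticipate is the careful handling of the endpoints of $P_i$ to ensure that the only non-solid successor a branching node inside $P_i$ can point to within $P_i$ is $e_i$ (and not some intermediate padded node coming from a shorter prefix or suffix in the $BOSS$ padding), together with deploying the unambiguity condition precisely enough to forbid any sibling of $u$ from also carrying color $c_i$.
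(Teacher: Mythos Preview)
Your proposal is correct but takes a genuinely different route from the paper. The paper's proof is purely operational: it simply describes the reconstruction walk (follow the unique outgoing edge when outdegree is one; at a branching node, follow the successor carrying $s_i$'s color; stop at $e_i$) and leaves it implicit that this procedure recovers $R_i$. It performs no decomposition of $P_i$, no case analysis on successors of branching nodes, no explicit appeal to unambiguity to rule out a competing colored sibling, and no minimality argument. Your structural approach---classifying each branching node's $P_i$-successor as either a critical node or $e_i$, then invoking the unambiguity hypothesis for uniqueness, and finally arguing necessity by removing a colored node---actually checks the safety condition against its formal definition and addresses the ``we have to'' direction, which the paper's proof does not touch. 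The paper's version is shorter and reads as an algorithmic sketch; yours makes explicit why each of the three colored classes is both sufficient and required, and your flagged obstacle about intermediate padded nodes is a genuine subtlety the paper simply elides.
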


\begin{proof}
We start a walk from $s_i$ using the following rules: (i) if the current node $v$ in the walk has outdegree one, then we follow its only outgoing edge, (ii) if $v$ is a branching node, i.e., it has outdegree more than one, then we inspect its successor nodes and follow the one colored with the same color of $s_i$ and (iii) if $v$ is equal to $e_i$, then we stop the traversal.\qed 
\end{proof}

Note that the successor nodes of a branching node are critical by definition, so they are always colored. On the other hand, nodes with outdegree one do not require a color inspection because they have only one possible way out. 

Coloring the nodes $s_i$ and $e_i$ for every $R_i$ is necessary; otherwise, it would be difficult to know when a path starts or ends. Consider, for example, using the solid nodes that represent the $k-1$ prefix and the $k-1$ suffix of $R_i$ as starting and ending points respectively. It might happen that the starting point of $R_i$ can also be a critical point of another sequence $R_j$. If we start a reconstruction from $s_i$ and pick the color of $R_j$, then we will generate an incomplete sequence. A similar argument can be used for ending nodes. The concepts associated with our coloring idea are depicted in Figures~\ref{fig:col_diag}A and \ref{fig:col_diag}B.

\paragraph{\textbf{{\em Unsafe coloring.}}}\label{sec:unreco} As explained in Section~\ref{prem:dbgcol}, we can use the recoloring idea of \cite{alipanahi2018recoloring} to reduce the number of columns in $C$. Still, using the same colors for unrelated strings is not safe for reconstructing unambiguous sequences. 

\begin{lemma}\label{lem:unco}
Using the same color $c$ for two unambiguous sequences $R_i,R_j \in R'$ that do not share any $k-1$ substring might produce an unsafe path for $R_i$ or $R_j$.
\end{lemma}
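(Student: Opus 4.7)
The plan is to prove the lemma by exhibiting a configuration showing that, even when $R_i$ and $R_j$ share no $(k-1)$-substring (so their paths $P_i, P_j$ in $BOSS(G)$ are node-disjoint), a shared color can cause the safety condition of Lemma~\ref{lem:col} to fail. Since the lemma is an existential claim (``might produce''), a small witness suffices; the work is in showing that the witness respects every definition simultaneously.

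First I would isolate the structural ingredient that forces the conflict: a branching node $v$ on $P_i$ (with outdegree at least two) together with a sibling successor $u'$ of the in-path successor $u$. I would then arrange $R_j$ so that $u'$ lies on $P_j$ but $R_j$ enters $u'$ from a predecessor $v' \neq v$. In particular, $v$ itself is not visited by $P_j$, so $R_i$ and $R_j$ share no $(k-1)$-mer, and $u$ is not visited by $P_j$ either. This node-disjointness must be checked explicitly against the definition of the dBG since all shared substrings of length $k-1$ would correspond exactly to shared nodes.

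The next step is to invoke the coloring scheme. Because $v$ is a predecessor of $u'$ and has outdegree greater than one, $u'$ meets the definition of a critical node in $BOSS(G)$. Lying on $P_j$, $u'$ must therefore receive the color assigned to $R_j$; symmetrically, $u$ receives the color of $R_i$. If both colors equal $c$, then at step $v$ of the reconstruction walk described in the proof of Lemma~\ref{lem:col}, rule (ii) finds two distinct successors $u$ and $u'$ both carrying color $c$. The branching node $v$ thus has more than one same-colored successor on $P_i$, which directly violates the safety definition given in Section~3. Hence $P_i$ is unsafe.

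The delicate step is verifying that the configuration can actually be realized over the DNA alphabet: I would need to pick concrete $(k-1)$-mers so that $v$ genuinely branches to both $u$ and $u'$, so that $u'$ has indegree at least two (to admit $v' \neq v$ as another predecessor), and so that both $R_i$ and $R_j$ are individually unambiguous sequences; otherwise the failure could be blamed on a single sequence rather than on the shared color. Choosing $k$ small and constructing two short reads that traverse the forks in the intended way should make this straightforward, and the argument above then transfers verbatim to the general case.
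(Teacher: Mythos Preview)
Your argument is correct and isolates the same mechanism the paper uses: a branching node on $P_i$ acquires a sibling successor that lies on $P_j$, so both successors carry $c$ and the safety condition fails at that branch. The paper packages this inside a four-sequence ``grid'', introducing an auxiliary pair $R_x,R_y$ with a second color $c'$ whose cross-edges are precisely what create the branching; your construction is essentially the stripped-down half of that picture, which is arguably cleaner for an existential lemma.

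There is one small gap you should close. You assert that ``constructing two short reads'' realizes the configuration, but in this dBG model an edge exists only when some read in $R'$ contains the corresponding $k$-mer. Since $R_i$ leaves $v$ toward $u$ and you explicitly bar $P_j$ from visiting $v$, neither $R_i$ nor $R_j$ can supply the edge $(v,u')$, so $v$ would not in fact branch. You therefore need at least one auxiliary read in $R'$ (not necessarily colored $c$) to contribute that $k$-mer; this is exactly the role played by $R_x$ in the paper's grid. Once you name that third read, your proof goes through.
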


\begin{proof}
Assume there is another pair of sequences $R_x, R_y \in R'$ that do not share any $k-1$ subsequence either, to which we assign them color $c'$. Suppose that the paths of $R_x$ and $R_j$ crosses the paths of $R_i$ and $R_j$ such that the resulting dBG topology resembles a grid. In other words, if $R_i$ has the edge $(u,u')$ and $R_j$ has the edge $(v,v')$, then $R_x$ has the edge $(u,v)$ and $R_y$ has the edge $(u',v')$. In this scenario, $v$ will have two successors, node $v'$ from the path of $R_j$ and some other node $v''$ from the path of $R_x$. Both $v'$ and $v''$ are critical by definition so they will be colored with $c$ and $c'$ respectively. The problem is that node $v'$ is also a critical node for $R_y$, so it will also have color $c'$. The reason is that $u'$, a node that precedes $v'$, appears in $R_i$ and $R_y$. As a consequence, the path of $R_x$ is no longer safe because one of its nodes ($v$ in this example) has to successors colored with $c'$. A similar argument can be made for $R_i$ and color $c$. Figure \ref{fig:unsafe_col} depicts the idea of this proof.
\qed \end{proof}

When spurious edges connect paths of unrelated sequences that are assigned the same color (as the in the proof of Lemma~\ref{lem:unco}), we can generate chimeric strings if, by error, we follow one of those edges. In the algorithm, we solve this problem by assigning different colors to those strings with sporadic edges, even if they do not share any $k-1$ substring. 

\begin{figure}[!t]
\centering
\includegraphics[width=0.8\linewidth]{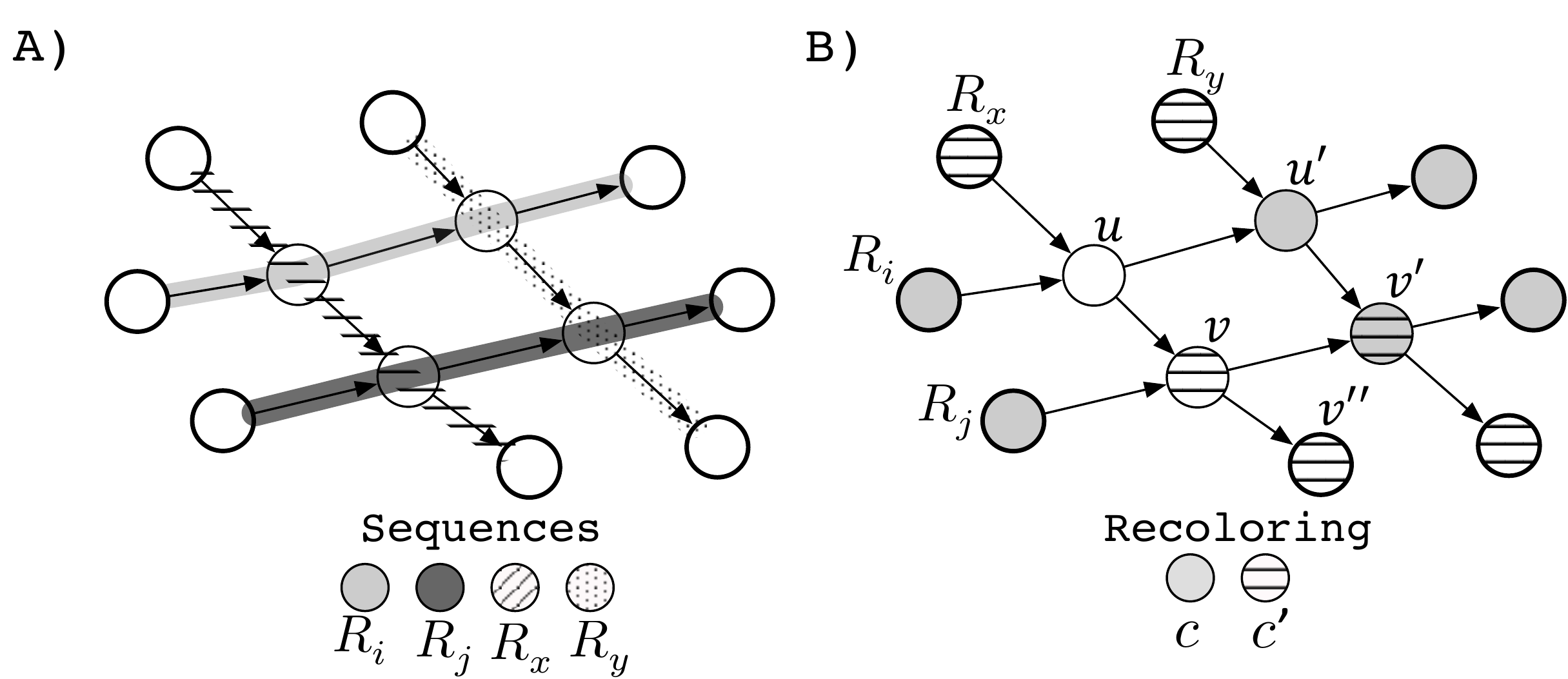}
\caption{Example of unsafe paths produced by a graph recoloring. (A) The dBG generated from the unambiguous sequences $R_i$, $R_j$, $R_x$ and $R_y$. Every texture represents the path of a specific string. (B) Recolored dBG. Sequences $R_i$ and $R_j$ are assigned the same color $c$ (light gray) as they do not share any $k-1$ substring. Similarly, sequences $R_x$ and $R_y$ are assigned another color $c'$ (horizontal lines) as they do not share any $k-1$ sequence neither. Nodes $u,u',v,v'$ and $v''$ are those mentioned in the Proof of Lemma ~\ref{lem:unco}. The sequences of $R_i$ and $R_x$ cannot be reconstructed as their paths become unsafe after the recoloring.}
\label{fig:unsafe_col}
\end{figure}

\paragraph{\textbf{{\em Safer and greedy coloring.}}}\label{sec:greedycol} Our greedy coloring algorithm starts by marking in a bitmap $N=[1..|V'|]$ the $p$ nodes of $BOSS(G)$ that need to be colored (starting, ending and critical). After that, we create an array $M$ of $p$ entries. Every $M[j]$ with $j \in [1..p]$ will contain a dynamic vector that stores the colors of the \emph{j-th} colored node in the $BOSS$ ordering. We also add $\texttt{rank}_1$ support to $N$ to map a node $v \in V$ to its array of colors in $M$. Thus, its position can be inferred as $\texttt{rank}_1(N,v)$.

The only inputs we need for the algorithm are $N$, $R'$ and $BOSS(G)$. For every $R_i \in R'$ we proceed as follows; we append a dummy symbol at the ends of the string, and then use the function \texttt{label2Node} to find the node $v$ labeled with the $k-1$ prefix of $R_i$. Note that this prefix will map a starting node as we append dummies to $R_i$. From $v$, we begin a walk on the graph and follow the edges whose symbols match the characters in the suffix $R_i[k..|R_i|]$. Note now that the last node $v'$ we visit in this walk is an ending node that maps the $k-1$ suffix of $R_i$. As we move through the edges, we store in an array $W_i$ the starting, ending, and critical nodes associated with $R_i$. Additionally, we push into another array $I_i$ the neighbor nodes of the walk that need to be inspected to assign a color to $R_i$. The rules for pushing elements into $I_i$ are as follows; i) if $v$ is a node in the path of $R_i$ with outdegree more than one, then we push all its successor nodes into $I_i$, ii) if $v$ is a node in the path of $R_i$ with indegree more than one, then we visit every predecessor node $v'$ of $v$, and if $v'$ has outdegree more than one, then we push into $I_i$ the successor nodes of $v'$. Once we finish the traversal, we create a hash map $H_i$ and fill it with the colors that were previously assigned to the nodes in $I_i$ and $W_i$. After that, we pick the smallest color $c'$ that is not in the keys of $H_i$, and push it to every array $M[\texttt{rank}_1(N,j)]$ with $j \in W_i$. After we process all the sequences in $R'$, the final set of colors is represented by the values in $M$. The whole processing of coloring a $R_i$ is described in detail by the procedure \texttt{greedyCol} in Pseudocode~\ref{alg:gc}.   

The construction of the sets $W_i$ and $I_i$ is independent for every string in $R'$, so it can be done in parallel. However, the construction of the hash map $H_i$ and the assignment of the color $c'$ to the elements of $W_i$ has to be performed sequentially as all the sequences in $R'$ need concurrent access to $M$. 
\begin{figure}[!t]
\centering
\includegraphics[width=0.9\linewidth]{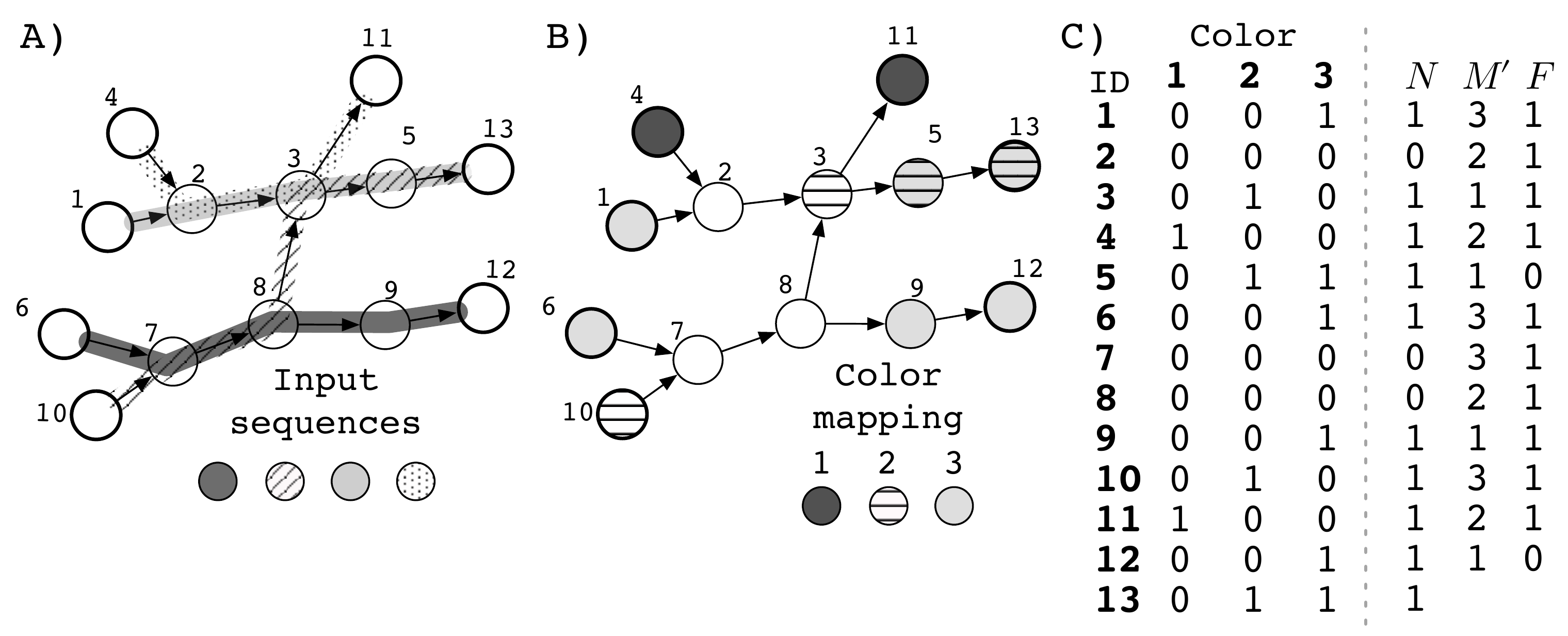}
\caption{Succinct colored dBG. (A) The topology of the graph. Colors and textures represent the paths that spell the input sequences of the dBG. Numbers over the nodes are their identifiers. Nodes 4,1,6 and 10 are starting nodes (darker borders). Nodes 11,13 and 12 are ending nodes and nodes 3,9,11 and 5 are critical. (B) Our greedy coloring algorithm. (C) The binary matrix $C$ that encodes the colors of Figure B. The left side is $C$ in its uncompressed form and the right side is our succinct version of $C$ using the arrays $N$,$M'$, and $F$.}
\label{fig:col_diag}
\end{figure}

\paragraph{\textbf{{\em Ambiguous sequences.}}}\label{sec:ambseqs} Our scheme, however, cannot safely retrieve sequences that are ambiguous. 

\begin{lemma}
Ambiguous sequences of $R'$ cannot be reconstructed safely from the color matrix $C$ and $BOSS(G)$.
\end{lemma}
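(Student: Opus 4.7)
The plan is to derive unsafety directly from the definition of ambiguity together with how the greedy coloring treats critical nodes. First I would unfold the definitions of Section~3: if $R_i \in R'$ is ambiguous, then its path $P_i$ contains a pair of distinct colored nodes $u,v$ that share a predecessor $v' \in P_i$. Because $u$ and $v$ are two distinct out-neighbors of $v'$, the node $v'$ has outdegree at least two, so $v'$ is a branching node on $P_i$. Each of $u$ and $v$ is then a solid node whose predecessor $v'$ has outdegree larger than one, which by definition makes both of them critical.

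Next I would argue that both $u$ and $v$ end up colored with the color $c_i$ assigned to $R_i$ by \texttt{greedyCol}. As the traversal of $R_i$ walks through $P_i$, every critical node encountered is stored in $W_i$; since $u,v \in P_i$ and both are critical, both are pushed into $W_i$, and hence $c_i$ is appended to the color vectors $M[\texttt{rank}_1(N,u)]$ and $M[\texttt{rank}_1(N,v)]$. This holds regardless of the particular color value chosen by the greedy step, because $u$ and $v$ inherit exactly the color eventually picked for $R_i$.

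Finally, I would invoke the definition of a safe path: $P_i$ is safe only if every branching node it contains has exactly one successor colored with $c_i$. Since $v'$ is branching and has two successors, $u$ and $v$, both colored with $c_i$, the path $P_i$ is not safe. When the reconstruction procedure of Lemma~\ref{lem:col} arrives at $v'$, it finds more than one out-neighbor tagged with $c_i$ and has no deterministic rule for picking the correct edge; any wrong choice propagates into an incorrect sequence. Therefore $R_i$ cannot be safely reconstructed from $C$ and $BOSS(G)$.

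The main obstacle is mostly bookkeeping: tightening the chain from ``colored nodes on $P_i$ sharing a predecessor on $P_i$'' to ``branching node with two $c_i$-colored successors.'' One subtlety I would spell out explicitly is that $v'$ is actually visited by the reconstruction walk before at least one of $u,v$, so the ambiguity really does surface during traversal; this is immediate because the walk follows $P_i$ in order from $s_i$ to $e_i$ and $v' \in P_i$. A further remark worth including is that no alternative reconstruction rule based solely on $C$ and $BOSS(G)$ can resolve the tie at $v'$, since the index stores no information beyond the color sets at each colored node.
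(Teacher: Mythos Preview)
Your argument is sound, but it takes a different route from the paper. The paper does not argue abstractly from the definition at all; it exhibits a single concrete witness: the one-string collection $R=\{XbXc\}$ with $k=|X|+1$, where the node labeled $X$ has two outgoing edges ($b$ and $c$) whose targets both receive the unique color of $R_1$, so the walk stalls at $X$. Your proof instead unfolds the definition of ambiguity and shows that the shared predecessor $v'$ is necessarily branching with two $c_i$-colored successors, which directly violates safety. Your version is more general and makes the connection between the formal definition and the failure of \texttt{buildSeqs} explicit; the paper's version is shorter and more vivid but only illustrates one instance rather than covering the full class.

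One point to tighten: the paper's definition of \emph{unambiguous} is a conjunction---there must exist a \emph{path} (not merely a walk) spelling $R_i$, \emph{and} that path must have no colored pair sharing an in-path predecessor. Hence \emph{ambiguous} has two disjuncts, and your argument as written only treats the second (you assume a path $P_i$ exists). The paper's own $XbXc$ example is in fact an instance of the first disjunct: the node $X$ is visited twice, so the traversal of $R_1$ is a walk, not a path. Your reasoning extends to that case with almost no change---a repeated interior node on the walk has at least two distinct on-walk successors, both critical, both placed in $W_i$ and hence both colored $c_i$---but you should say so explicitly rather than silently restricting to the path case.
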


\begin{proof}
Assume that collection $R$ is composed just by one string $R_1=XbXc$, where $X$ is a repeated substring and $b,c$ are two different symbols in $\Sigma$. Consider also that the kmer size for $BOSS(G)$ is $k=|X|+1$. The instance of $BOSS(G)$ will have a node $v$ labeled with $X$, with two outgoing edges, whose symbols are $b$ and $c$. Given our coloring scheme, the successor nodes of $v$ will be both colored with the same color. As a consequence, if during a walk we reach node $v$, then we will get stuck because there is not enough information to decide which is the correct edge to follow (both successor nodes have the same color).\qed
\end{proof}

A sequence $R_i$ will be ambiguous if it has the same $k-1$ pattern in two different contexts. Another case in which $R_i$ is ambiguous is when a spurious edge connects an uncolored node of $R_i$ with two or more critical nodes in the same path. Note that unlike unambiguous sequences with spurious edges, an ambiguous sequence will always be encoded by an unsafe path, regardless of the recoloring algorithm. In general, the number of ambiguous sequences will depend on the value we use for $k$.

\section{Compressing the colored dBG}\label{sec:colrep}

The pair $(M,N)$ can be regarded as a compact representation of $C$, where the empty rows were discarded. Every $M[i]$, with $i\in [1..|M|]$, is a row with at least one value, and every color $M[i][j]$, with $j \in [1..|M[i]|]$, is a column. However, $M$ is not succinct enough to make it practical. We are still using a computer word for every color of $M$. Besides, we need $|M|$ extra words to store the pointers for the lists in $M$. 

We compress $M$ by using an idea similar to the one implemented in $BOSS$ to store the edges of the dBG. The first step is to sort the colors of every list $M[i]$. Because the greedy coloring generates a set of unique colors for every node, each $M[i]$ becomes an array of strictly increasing elements after the sorting. Thus, instead of storing the values explicitly, we encode them as deltas, i.e., $M[i][j]=M[i][j]-M[i][j-1]$. After transforming $M$, we concatenate all its values into one single list $M'$ and create a bit map $F=[1..|M'|]$ to mark the first element of every $M[i]$ in $M'$. We store $M'$ using Elias-Fano encoding \cite{elias1974,fano1971number} and $F$ using the compressed representation for bit maps of \cite{raman2007}. Finally, we add $\texttt{select}_1$ support to $F$ to map a range of elements in $M'$ to an array in $M$. The complete representation of the color matrix now becomes $C = N + F + M'$ (see Figure \ref{fig:col_diag}C). The complete index of the colored dBG is thus composed of our version of $C$ and $BOSS(G)$. We now formalize the idea of retrieving the colors of a node from the succinct representation of $C$.

\begin{itemize}
    \item \texttt{getColors}$(v)$: list of colors assigned to node $v$.
\end{itemize}

\begin{theorem} the function  {\rm \tt getColors}$(v)$ computes in $\mathcal{O}(c)$ time the $c$ colors assigned to node $v$.
\end{theorem}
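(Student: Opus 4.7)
The plan is to show that \texttt{getColors}$(v)$ reduces to a constant number of $\mathcal{O}(1)$-time operations on $N$ and $F$, followed by a sequential scan over $c$ consecutive entries of $M'$, plus $c$ additive updates to undo the delta encoding. Each stage is either constant-time or linear in the output size $c$, so the total running time is $\mathcal{O}(c)$.

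First, I would handle the case in which $v$ is not among the $p$ colored nodes: a single lookup $N[v]$ decides this in $\mathcal{O}(1)$, and if $N[v]=0$ the function returns the empty list (with $c=0$). Otherwise, I compute $i=\texttt{rank}_1(N,v)$ in $\mathcal{O}(1)$ using the rank structure over $N$; this is the index of the row of $M$ that stores the colors of $v$. To locate the slice of $M'$ corresponding to this row, I evaluate $s=\texttt{select}_1(F,i)$ and $s'=\texttt{select}_1(F,i+1)$ in $\mathcal{O}(1)$ each (setting $s'=|M'|+1$ when $i$ is the last marked row), which delimit the $c = s'-s$ deltas belonging to row $i$.

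Next, I would retrieve the $c$ deltas $M'[s], M'[s+1],\ldots, M'[s'-1]$. Because $M'$ is stored with Elias-Fano encoding, sequential access to $c$ consecutive integers takes $\mathcal{O}(c)$ total time (the low parts are read by direct array access and the high parts are traversed by advancing across the unary-coded bucket separators). Finally, I recover the original sorted colors by a running prefix sum: $M[i][1] \leftarrow M'[s]$, and $M[i][j] \leftarrow M[i][j-1] + M'[s+j-1]$ for $j=2,\ldots,c$, which contributes another $\mathcal{O}(c)$ additions. Summing the contributions yields $\mathcal{O}(1) + \mathcal{O}(c) + \mathcal{O}(c) = \mathcal{O}(c)$.

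The only nontrivial ingredient is the claim that $c$ consecutive accesses on the Elias-Fano representation of $M'$ cost $\mathcal{O}(c)$ rather than $\mathcal{O}(c \log |M'|)$, but this is exactly the standard sequential-access property of Elias-Fano (cited in the paper), so no new argument is needed. Everything else is a direct appeal to the constant-time rank/select primitives on $N$ and $F$ described in the Preliminaries.
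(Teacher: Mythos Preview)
Your proof is correct and follows essentially the same approach as the paper: compute $\texttt{rank}_1$ on $N$, two $\texttt{select}_1$ queries on $F$ to delimit the slice of $M'$, then a linear scan with prefix-sum decoding. The only cosmetic differences are that you explicitly handle the uncolored-node case and justify the $\mathcal{O}(c)$ scan via Elias-Fano's sequential-access property, whereas the paper appeals to $\mathcal{O}(1)$ random access per element; either justification suffices.
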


\begin{proof}
We first compute the rank $r$ of node $v$ within the colored nodes. This operation is carried out with $r=\texttt{rank}_1(N,v)$. After retrieving $r$, we obtain the range $[i..j]$ in $M'$ where the values of $v$ lie. For this purpose, we perform two $\texttt{select}_1$ operations over $F$, $[i,j]=(\texttt{select}_1(F,r),\texttt{select}_1(F,r+1)-1)$. Finally, we scan the range $[i..j]$ in $M'$, and as we read the values, we incrementally reconstruct the colors from the deltas. All the \texttt{rank} and \texttt{select} operations takes $\mathcal{O}(1)$, and reading the $c=j-i+1$ entries from $M'$ takes $\mathcal{O}(c)$, because retrieving an element from an Elias-Fano-encoded array also takes $\mathcal{O}(1)$. In conclusion, computing the colors of $v$ takes $\mathcal{O}(c)$.\qed   
\end{proof}

\section{Algorithms for the colored dBG}
\paragraph{\textbf{{\em Reconstructing unambiguous sequences.}}}\label{sec:trav_graph} We describe now an online algorithm that works on top of our index and that reconstructs all the unambiguous sequences in $R'$. We cannot tell, however, if a reconstructed string $R_i$ was present in the original set $R$ or if it was its reverse complement $R_{i}^{rc}$. This is not really a problem, because a sequence and its reverse complement are equivalent in most of the Bioinformatic analyses. 

The algorithm receives as input a starting node $v$. It first computes an array $A$ with the colors assigned to $v$ using the function \texttt{getColors} (see Section~\ref{sec:colrep}), and then sets a string $S=\texttt{nodeLabel}(v)$. For every color $a \in A$, the algorithm performs the following steps; initializes two temporary variables, an integer $v'=v$ and string $S'=S$, and then begins a graph walk from $v'$. If the outdegree of $v'$ is one, then the next node in the walk is the successor node $v'=\texttt{outgoing\_r}(v',1)$. On the other hand, if the outdegree of $v'$ is more than one, then the algorithm inspects all the successor nodes of $v'$ to check which one of them is the node $v''$ colored with $a$. If there is only one such $v''$, then it sets $v'=v''$. This procedure continues until $v'$ becomes an ending node. During the walk, the edge symbols are pushed into $S'$. When an ending node is reached, the algorithm reports $S'[1..|S'|-1]$ as the reconstructed sequence.

If at some point during a walk, the algorithm reaches a node with outdegree more than one, and with more than one successor colored with $a$, then aborts the reconstruction of the string as the path is unsafe for color $a$. Then, it returns to $v$ and continues with the next sequence. The complete procedure is detailed in the function \texttt{buildSeqs} of Algorithm~\ref{alg:bs}.

\paragraph{\textbf{{\em Assembling contigs.}}} Our coloring scheme for the dBG allows us to report sequences that represent the overlap of two or more strings of $R'$. There are several ways in which a set of sequences can be arranged such that they form valid overlaps, but in practice, we are not interested in all such combinations. What we want is to compute only those union strings that describe real segments of the underlying genome of $R'$, a.k.a \emph{contig} sequences. In this work we do not go deep into the complexities of contig assembly  (see \cite{kececioglu1995comb,medvedev2007computability,idury1995new,medvedev2011paired} for some review). Instead, we propose a simple heuristic, that work on top of our index, and that it is aimed to produce contigs that are longer than those produced by uncolored dBGs. 

Similar to \texttt{buildSeqs}, this method traverses the graph to reconstruct the contigs. During the process, it uses the color information to weight the outgoing edges \emph{on the fly}, and thus, inferring which is the most probable path that matches a real segment of the source DNA.  

The algorithm receives as input a starting node $v$ and initializes a set $L$ and hash map $Q$. Both data structures are used to store information about the strings that belong to the contig of $v$. A read $R_i \in R'$ is identified in the index as a pair $(c,v')$, where $c$ is a color assigned to $R_i$ and $v'$ is the starting node of its path. $L$ contains the reads already traversed while $Q$ contains the active reads. The algorithm also initializes a string $S=\texttt{nodeLabel}(v)$ and pushes every pair $Q[c_i]=v$ with $c_i \in \texttt{getColor}(v)$. After that, it begins a walk from $v$ and pushes into $S$ the symbols of the edges it visits. For every new node $v'$ reached during the walk, the algorithm checks if one of its predecessor nodes, say $u$, is a starting node. If so, then for every $c_i \in \texttt{getColors}(u)$ sets $Q[c_i]=u$ if $(c_i,u)$ does not exist in $L$. On the other hand, if one of the successors of $v'$, say $u'$, is an ending node, then for every $c_i \in \texttt{getColors}(u')$ sets $L[(c,Q[c_i])]$ and then removes the entry $Q[c_i]$. After updating $Q$ and $L$, it selects one of the outgoing edges of $v'$ to continue the walk. For this purpose, the algorithm uses the following rules; (i) if $v'$ has outdegree one, then it takes its only outgoing edge, (ii) if $v'$ has outdegree more than one, then it inspects how the colors in $Q$ distribute among the successors of $v'$. If there is only one successor node of $v'$, say $v''$, colored with at least $x$ fraction of the colors of $Q$, where $x$ is a parameter, then the algorithm follows $v''$, and removes from $Q$ the colors of the other successor nodes of $v'$.

The algorithm will stop if; (i) there is no such $v''$ that meet the $x$ threshold, (ii) there is more than one successor of $v'$ with the same color or (iii) $v'$ has outdegree one, but the successor node is an ending node. After finishing the walk, the substring $S[2..|S|]$ is reported as the contig. The procedure \texttt{contigAssm} in Pseudocode~\ref{alg:ca} describes in detail the contig assembly algorithm, and a graphical example is shown in Figure~\ref{fig:ctg_assm}.

%A relevant aspect of this algorithm is that it can compute the \emph{coverage} of the contigs as it builds them. Informally, the concept of coverage refers to the number of reads that support every character in a contig. This information, along with the sequences of the contigs, is what most of the Bioinformatics programs usually require. Thus, we can think of this algorithm as a general tool for biological analyses in compressed space.

\begin{figure}[!t]
\centering
\includegraphics[width=\linewidth]{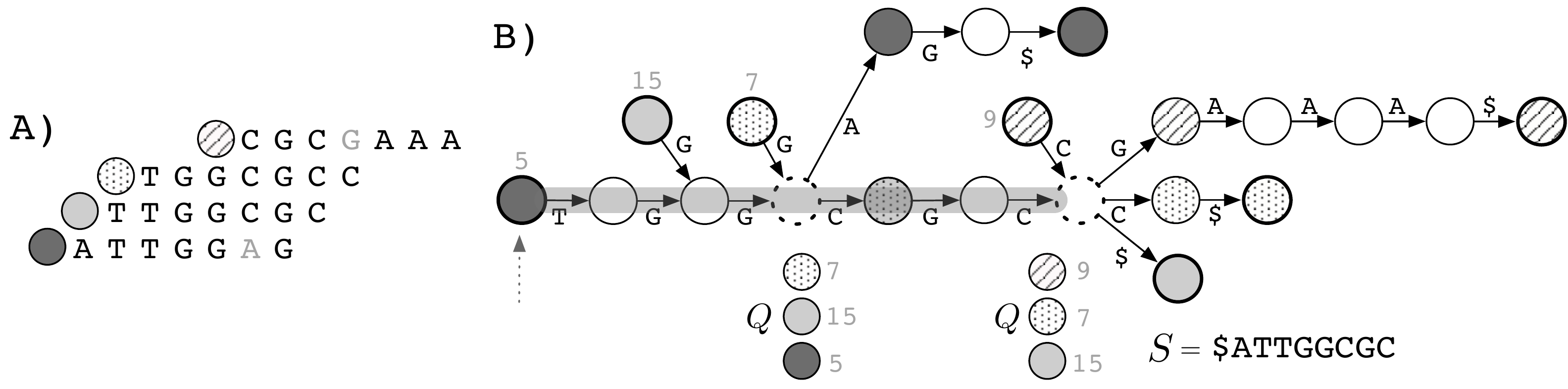}
\caption{Example of the assembly of a contig using our index. (A) Inexact overlap of four sequences. The circle to the left of every string represents its color in the dBG. Light gray symbols are mismatches in the overlap. (B) The colored dBG of the sequences. Circles with darker borders are starting and ending nodes. Light gray values over the starting nodes are their identifiers. The contig assembly begins in node 5 (denoted with a dashed arrow) and the threshold $x$ to continue the extension is set to 0.5. The state of the hash map $Q$ when the walk reaches a branching node (dashed circles) is depicted below the graph. The assembly ends in the right-most branching node as it has not a successor node that contains at least 50\% of the colors in $Q$. The final contig is shown as a light grey path over the graph, and its sequence is stored in $S$.}
\label{fig:ctg_assm}
\end{figure}

\section{Experiments}

We use a set of reads generated from the E.coli genome\footnote{http://spades.bioinf.spbau.ru/spades\_test\_datasets/ecoli\_mc} to test the ideas described in this article. The raw file was in \texttt{FASTQ} format and contained 14,214,324 reads of 100 characters long each. We preprocessed the file by removing sequencing errors using the tool of~\cite{bankevich2012spades}, and discarding reads with \texttt{N} symbols. The preprocessing yielded a data set of 8,655,214 reads (a \texttt{FASTQ} file of 2GB). Additionally, we discarded sequencing qualities and the identifiers of the reads as they are not considered in our data structure. From the resulting set $R$ (a text file of 833.67 MB), we created another set $R'$ that considers the elements in $R$ and their reverse complements.

Our version of the colored dBG, the algorithm for greedy coloring and the algorithms for reconstructing and assembly reads were implemented in \texttt{C++}\footnote{https://bitbucket.org/DiegoDiazDominguez/colored\_bos/src/master}, on top of the \texttt{SDSL-lite} library~\cite{gbmp2014sea}. In our implementation, arrays $M'$ and $F$ are precomputed beforehand to store the colors directly to them, because using the dynamic list $M$ is not cache-friendly. Additionally, all our code, except the algorithm for contig assembly, can be executed using multiple threads.

We built six instances of our index using $R'$ as input. We choose different values for $k$, from 25 to 50 in steps of five. The coloring of every one of these instances was carried out using eight threads. Statistics about the graph topologies are shown in Table \ref{tab:dbg}, and statistics about the coloring process are shown in Table 2. In every instance, we reconstructed the unambiguous reads (see  Table \ref{tab:color}). Additionally, we generated an FM-index of $R'$ to locate the reconstructed reads and check that they were real sequences.  All the tests were carried out on a machine with Debian 4.9, 252 GB of RAM and processor Intel(R) Xeon(R) Silver @ 2.10GHz, with 32 cores.

\section{Results}

The average compression rate achieved by our index is 1.89, meaning that, in all the cases, the data structure used about half the space of the plain representation of the reads (see Table~\ref{tab:dbg}). We also note that the smaller the value for $k$, the greater the size of the index. This behavior is expected as the dBG becomes denser when we decrease $k$. Thus, we have to store a higher number of colors per node.

\begin{table}[!htb]
\caption{Statistics about the different colored dBGs generated in the experiments. The index size is expressed in MB and considers the space of $BOSS(G)$ plus the space of our succinct version of $C$. The compression rate was calculated as the space of the plain representation of the reads (833.67 MB) divided by the index size.}
\label{tab:dbg}
\centering
\begin{tabular}{c|c|c|c|c|c}
\toprule
\textbf{k}&\textbf{Total number}&\textbf{Number of}  &\textbf{Number of}&\textbf{Index}    &\textbf{Compression} \\
          &\textbf{of nodes}    &\textbf{solid nodes}&\textbf{edges}    &\textbf{size}&\textbf{rate} \\
\midrule
25 & 106,028,714 & 11,257,781 & 120,610,151 & 446.38 & 1.86 \\
30 & 142,591,410 & 11,425,646 & 157,186,548 & 443.82 & 1.87 \\
35 & 179,167,289 & 11,561,630 & 193,773,251 & 441.18 & 1.88 \\
40 & 215,751,326 & 11,667,364 & 230,365,635 & 438.23 & 1.90 \\
45 & 252,337,929 & 11,743,320 & 266,958,709 & 435.30 & 1.91 \\
50 & 288,925,674 & 11,791,640 & 303,552,318 & 432.13 & 1.92 \\                       
\bottomrule
\end{tabular}
\end{table}

The number of colors of every instance is several orders of magnitude smaller than the number of reads, being $k=25$ the instance with more colors (6552) and $k=50$ the instance with the fewest (1689). Even though the fraction of colored nodes in every instance is small, the percentage of the index space used by the color matrix is still high (~73\% on average). Regarding the time for coloring the graph, it seems to be reasonable for practical purposes if we use several threads. In fact, building, filling and compacting $C$ took 5,015 seconds on average, and the value decreases if we increment $k$. The working space, however, is still considerable. We had memory peaks ranging from 3.03 GB to 4.3 GB, depending on the value for $k$ (see Table~\ref{tab:color}).

The process of reconstructing the reads yielded a small number of ambiguous sequences in all the instances (2,760 sequences on average), and decreases with higher values of $k$, especially for values above 40 (see Table~\ref{tab:color}).

\begin{table}[ht]
\centering
\caption{Statistics about our greedy coloring algorithm.  The column “Color space usage” refers to the percentage of the index space used by our succinct version of $C$. Elapsed time and memory peak are expressed in seconds and MB, respectively, and both consider only the process of building, filling, and compacting the color matrix.}
\label{tab:color}
\begin{tabular}{c|c|c|c|c|c|c}
\toprule
\textbf{k}&\textbf{Number of}    &\textbf{Number of}&\textbf{Color space}&\textbf{Ambiguous}&\textbf{Elapsed}&\textbf{Memory}\\
          &\textbf{colored nodes}&\textbf{colors}   &\textbf{usage}      &\textbf{sequences}&\textbf{Time}&\textbf{peak}\\
\midrule
25 & 21,882,874 & 6,552 & 83.03 & 1904 &5,835 & 4,391\\
30 & 21,907,324 & 4,944 & 79.14 & 1502 &5,551 & 4,119\\
35 & 21,926,687 & 2,924 & 75.27 & 1224 &5,131 & 3,847\\
40 & 21,942,083 & 2,064 & 71.40 & 1054 &4,872 & 3,575\\
45 & 21,954,138 & 1,888 & 67.51 & 714 &4,507 & 3,303\\
50 & 21,964,947 & 1,689 & 63.58 & 176 &4,199 & 3,030\\                       
\bottomrule
\end{tabular}
\end{table}

\section{Conclusions and further work}

Experimental results shows our data structure is succinct, and that has a practical use. Still, we believe that a more careful algorithm for constructing the index is still necessary to reduce the memory peaks during the coloring. Further compaction of the color matrix can be achieved by using more elaborated compression techniques. However, this extra compression can increase the construction time of the colored dBG and produce a considerable slow down in the algorithms that work on top of it for extracting information from the reads. Comparison of our results with other similar data structures is difficult for the moment. Most of the indexes based on colored dBGs were not designed to handle huge sets of colors like ours and the greedy recoloring of \cite{alipanahi2018recoloring} does not scale well and needs extra information for reconstructing the reads. Still, it is a promising approach that, with further work, can be used in the future as a base for performing Bioinformatics analyses in compressed space.

% ---- Bibliography ----
%
% BibTeX users should specify bibliography style 'splncs04'.
% References will then be sorted and formatted in the correct style.
%
\bibliographystyle{splncs04}
\bibliography{main}
\newpage
\section{Appendix}
\appendix
\setcounter{table}{0}
\setcounter{figure}{0}
\renewcommand{\thetable}{A\arabic{table}}
\renewcommand\thefigure{A.\arabic{figure}}

\section{Pseudocodes}

\begin{algorithm}[!htb]
\caption{Function \texttt{greedyCol}}\label{alg:gc}
\begin{algorithmic}[1]
\Procedure{{\rm\tt greedyCol}}{$G$,$N$,$R_i$,$M$} \Comment{$G$ is a dBG, $N$ is a bitmap, $R_i$ is a string and $M$ is array of lists}
\State{$R_i \gets \texttt{\$}R_i\texttt{\$}$} \Comment{append dummy symbols at the ends of $R_i$}
\State{$v \gets \texttt{string2node}(R_i[1..k-1])$}
\State{$W_i \gets \emptyset$}
\State$I_i \gets I_i \cup \texttt{rank}_1(N,v)$

\ForEach{$r \in R_i[k-1..|R_i|]$} \Comment{traverse the dBG path of $R_i$}
\State$o\gets \texttt{outdegree}(G,v)$
\If{$o>1$}
\For{$j\gets 1$ \textbf{to} $o$}
\State$I_i \gets I_i \cup \texttt{rank}_1(N,\texttt{forward\_r}(G,v,j))$
\EndFor
\EndIf

\State$i\gets \texttt{indegree}(G,v)$

\If{$i>1$}
\For{$j\gets 1$ \textbf{to} $i$}
\State$v' \gets \texttt{incomming\_r}(G,v,j)$
\State$o' \gets \texttt{outdegree}(G,v'))$
\If{$o'>1$}
\For{$j\gets 1$ \textbf{to} $o'$}
\State$I_i \gets I_i \cup \texttt{rank}_1(N,\texttt{forward\_r}(G,v',j))$
\EndFor
\EndIf
\EndFor
\EndIf

\If{$N[v]$ is \texttt{true}}
\State$W_i \gets W_i \cup \texttt{rank}_1(N,v)$
\EndIf
\State$v\gets \texttt{forward}(G,v,r)$
\EndFor

\State$W_i\gets W_i \cup \texttt{rank}_1(N,v)$
\State$I_i \gets I_i \cup \texttt{rank}_1(N,v)$

\ForEach{$n \in I_i$} \Comment{compute the colors already used}
\ForEach{$c \in M[n]$}
\State{$H_i[c]\gets$ \textbf{true}}
\EndFor
\EndFor
\State{$c' \gets$ minimum color not in $H_i$}
\ForEach{$n \in W_i$} \Comment{color the nodes}
\State{$M[n] \gets M[n] \cup c'$}
\EndFor
\EndProcedure
\end{algorithmic}
\end{algorithm}

\begin{algorithm}[!htb]
\caption{Function \texttt{buildSeqs}}\label{alg:bs}
\begin{algorithmic}[1]
\Procedure{{\rm\tt buildSeqs}}{$G$,$v$} \Comment{$G$ is a colored dBG and $v$ is a starting node}
\State$L \gets \emptyset$ \Comment{list of rebuilt sequences}
\State$A \gets$ \texttt{getColors($G$,$v$)}
\State$S \gets \texttt{nodeLabel}(G,v)$ \Comment{initialize an string with the label of $v$}
\ForEach{$a \in C$}\label{seqbuilt_for}
\State$v' \gets v$\Comment{temporal dBG node}
\State$S' \gets S, amb \gets$ \textbf{false} 
\While{\texttt{isEnding($G$,$v'$)} is \textbf{false} and \emph{amb} is \textbf{false}}
\State$o \gets$ \texttt{outdegree($G$,$v'$)}
\If{$o$ is $1$}
\State$S' \gets S' \cup \texttt{edgeSymbol}(G, v',1)$ \Comment{push the new symbol into $S'$}
\State$v' \gets \texttt{forward\_r}(G, 1)$
\Else
\State$m \gets 0$
\For{$u \gets 1$ \textbf{to} $o$} \Comment{check which successors of $v'$ has color $a$}
\If{$a \in \texttt{getColors}(\texttt{forward\_r}(G,v',u))$}
\State$v' \gets \texttt{forward\_r}(G,v',u)$
\State$m \gets m + 1$
\EndIf
\EndFor
\If{$m>1$} \Comment{more than one successor $v'$ has color $a$}
\State$amb \gets$ \texttt{true}  
\EndIf
\EndIf
\EndWhile

\If{$amb$ \textbf{not} \texttt{true}}
\State$L \gets L \cup S[2..|S|-1]$
\EndIf

\EndFor
\State\textbf{return} $L$
\EndProcedure
\end{algorithmic}
\end{algorithm}

\begin{algorithm}[ht]
\caption{Function \texttt{contigAssm}}\label{alg:ca}
\begin{algorithmic}[1]
\Procedure{{\rm\tt contigAssm}}{$G$,$v$,$x$} \Comment{$v$ is a starting node and $x$ is a threshold}

\State $L \gets \emptyset$ 
\State $S \gets \texttt{nodeLabel}(G,v)$

\ForEach{$c_i \in \texttt{getColors}(v)$}
\State $Q[c_i] \gets v$ 
\EndFor

\While{\textbf{true}}

\If{$\texttt{indegree}(G, v)>1$}
\State{$v' \gets \texttt{backward\_r}(G,v,1))$}
\If{$\texttt{isStarting}(v')$} \Comment{add new reads to the contig}
\ForEach{$c_i \in \texttt{getColors}(v')$}
\If{$L[(c_i,v')]$ is not \texttt{true}}
\State{$Q[c_i] \gets v'$}
\EndIf
\EndFor
\EndIf
\EndIf

\If{$o \gets \texttt{outdegree}(G,v) > 1$}
\State{$t \gets v, v \gets 0$}
\For{$i\gets1$ to $o$}\Comment{compute the most probable successor node}
\State{$v' \gets \texttt{forward\_r}(G,t,i)$}
\If{$\texttt{isEnding}(v')$}\Comment{discard reads ending at $v$}
\ForEach{$c_i \in \texttt{getColors}(v')$}
\State$L[(c_i,Q[c_i])] \gets \texttt{true}$
\EndFor
\State{$Q \gets Q \setminus A$}

\Else
\State{$A \gets \texttt{getColors}(v')$}
\State{$w \gets (Q \cap A)/|Q|$}\Comment{weight the successor node}
\If{$w \geq x$}
\State{$v\gets v'$}
\State{$Q \gets A$}
\State$S \gets S \cup \texttt{edgeSymbol}(G, t, i)$
\State{\textbf{break}}
\EndIf
\EndIf
\EndFor
\If{$v$ \textbf{is} 0} \textbf{break} \Comment{no successor has the minimum weight $x$}
\EndIf
\Else
\State $v\gets \texttt{forward\_r}(G, v,1)$
\If{$\texttt{isEnding}(v)$} \textbf{break}
\EndIf
\State$S \gets S \cup \texttt{edgeSymbol}(G, v, 1)$ 
\EndIf
\EndWhile
\State{\textbf{return} $S[2..|S|]$}
\EndProcedure
\end{algorithmic}
\end{algorithm}

\end{document}